\begin{document}

\newtheorem{theorem}{Theorem}[section]
\newtheorem{lemma}{Lemma}[section]
\newtheorem{corollary}{Corollary}[section]
\newtheorem{claim}{Claim}[section]
\newtheorem{proposition}{Proposition}[section]
\newtheorem{definition}{Definition}[section]
\newtheorem{fact}{Fact}[section]
\newtheorem{example}{Example}[section]

\newcommand{\qed}{\hfill $\square$ \smallbreak}
\newenvironment{proof}[1][Proof]
{\par\noindent{\bf #1:} }{\hspace*{\fill}\nolinebreak{$\Box$}\bigskip\par}

\newcommand{\cV}{\mathcal{V}}
\newcommand{\cG}{\mathcal{G}}
\newcommand{\diam}[1]{\textup{diam}(#1)}
\newcommand{\diff}{\textup{diff}}
\newcommand{\portlabel}{\lambda}
\newcommand{\subdivide}[2]{\xi_{#2}(#1)}
\newcommand{\node}[2]{v_{#1}(#2)}  
\newcommand{\jump}[2]{\textsl{next}_{#2}(#1)}  
\newcommand{\portend}[2]{\textsl{end}_{#2}(#1)}
\newcommand{\modulo}{\,\textup{mod}\,}

\newcommand{\ints}{\mathbb{N}}
\newcommand{\st}{\hspace{0.1cm}\bigl|\bigr.\hspace{0.1cm}}

\title{{\bf Distinguishing Views in Symmetric Networks:\newline A Tight Lower Bound}}

\author{
Dariusz Dereniowski\footnote{Faculty of Electronics, Telecommunications and Informatics, Gda\'{n}sk University of Technology, Poland. E-mail: deren@eti.pg.gda.pl. Partially supported by National Science Centre grant DEC-2011/02/A/ST6/00201.}
\and Adrian Kosowski\footnote{Inria Paris and LIAFA, Universit\'e Paris Diderot, France. E-mail: adrian.kosowski@inria.fr}
\and Dominik Paj\k{a}k\footnote{University of Cambridge, United Kingdom. E-mail: dsp39@cl.cam.ac.uk}
}

\date{}
\maketitle

\thispagestyle{empty}

\begin{abstract}
The view of a node in a port-labeled network is an infinite tree encoding all walks in the network originating from this node.
We prove that for any integers $n\geq D\geq 1$, there exists a port-labeled network with at most $n$ nodes and diameter at most $D$ which contains a pair of nodes whose (infinite) views are different, but whose views truncated to depth $\Omega(D\log (n/D))$ are identical.

\vspace*{10pt}

\noindent
\textbf{Keywords:} anonymous network, port-labeled network, view, quotient graph.
\end{abstract}

\section{Introduction} \label{sec:intro}
The notion of a \emph{view} was introduced and first studied by Yamashita and Kameda in \cite{YamashitaK96} in the context of distributed message passing algorithms. In so-called anonymous networks (without unique identifiers accessible to a distributed algorithm), the view is a fundamental concept which allows for identification of the network topology and for breaking of symmetries between nodes. 
Different views for a pair of nodes guarantee that the corresponding nodes are distinguishable, which is useful in, e.g., leader election algorithms. View-based approaches have been successfully used when designing algorithms for various network problems, including map construction \cite{CDK10,DP12}, leader election \cite{CMY07,DFNS06,DP13,FP14,TKM12,YK99}, rendezvous \cite{CKP12,DMSVW08,GP11}, and other tasks~\cite{FRS03,YK96}.

The view from a node of a network is by definition (cf. Section~\ref{sec:preliminaries}) an infinite rooted tree, and therefore distributed algorithms (both for agents exploring the network or for the nodes in message passing models) can only know a finite subtree of the view. This motivates the question about the minimum integer $l$ such that the view truncated to depth $l$ contains all crucial information an algorithm may need.

Yamashita and Kameda proved that if views of two nodes truncated to depth $n^2$ are identical, then their infinite views are identical \cite{YamashitaK96}, where $n$ is the number of nodes of the network.
The bound has been improved to $n-1$ by Norris \cite{Norris95}.
Although this bound is asymptotically tight \cite{BoldiV02,Norris95}, it is far from being accurate for many networks.
Hence, one may ask for bounds expressed as function of different graph invariants.
Fraigniaud and Pelc proved in \cite{FraigniaudP12} that if two nodes have the same views do depth $\widehat{n}-1$ then their views are the same, where $\widehat{n}$ is the number of nodes having different views (or equivalently, $\widehat{n}$ is the size of the quotient graph \cite{YamashitaK96}). For some works on view computation see, e.g., \cite{BV02,Tani12}. Recently, Hendrickx~\cite{Hendrickx14} proved (for simple graphs with symmetric port labeling) an upper bound of $O(D\log(n/D))$ on the depth to which views need to be checked in order to be distinguished, where $D$ is the diameter of the network, leaving the tightness of this bound as an open problem.

In this work we provide a corresponding lower bound of $\Omega(D\log(n/D))$. In particular, for each $D'\geq 3$ and $n'\geq D'\cdot 2^{12}/3$, we construct an $n'$-node graph $G'$ with diameter at most $D'$ such that taking truncations of the view to depth $\frac{D'-5}{6}\log_2\frac{n'}{D'} - 0.41D'$ does not guarantee distinguishing a pair of nodes of this graph, which do in fact have different (infinite) views.
Our construction is done in two steps.
First, a list of graphs $G_l$, $l\geq 1$, is defined with the following properties: (a) $\diam{G_l}=3$ for each $l\geq 1$, and (b) $G_l$ contains two nodes $a_l$ and $b_l$ such that the views from them to depth $l=\Theta(\log n)$ are identical but their (infinite) views are different, where $n$ is the size of $G_l$.
Next, in order to extend the bound for arbitrarily large diameter $D'$ we then modify $G_l$ by subdividing each of its edges roughly $D'/3$ times so that the new graph: (a) has diameter roughly $D'$, and (b) contains two nodes $a_l$ and $b_l$ such that their views are the same till depth $\Theta(D'\log_2(n'/D'))$ but their views are different, where $n'$ is the size of the subdivided graph.

{
We remark that very recently~\cite{FP14}, a construction of a class of labeled graphs has been put forward in the context of lower bounds for the leader election problem on anonymous graphs, which can also be used to obtain a separation of node views at distance $\Theta(\log n)$ in a graph of diameter $D = O(1)$. The analysis of that class appears somewhat more involved than for our construction.}

\section{Preliminaries} \label{sec:preliminaries}

In this work we consider anonymous port labeled networks (the terms graph and network are used interchangeably throughout) in which the nodes do not have identifiers and each edge $\{u,v\}$ has two integers assigned to its endpoints, called the \emph{port numbers} at $u$ and $v$, respectively. The port numbers are assigned in such a way that for each node $v$ they are pairwise different and they form a consecutive set of integers $\{1,\ldots,k\}$, where $k$ is the number of neighbors of $v$ in $G$.
The number of neighbors of $v$ in $G$ is called the \emph{degree} of $v$ and is denoted by $\deg_G(v)$.
To simplify some statements we introduce a \emph{port labeling} function $\portlabel$ for $G$ defined in such a way that for each pair $u,v$ of adjacent nodes, $\portlabel(u,v)$ is the port label at $u$ of the edge $\{u,v\}$.
For each node $v$ of $G$ and for each $p\in\{1,\ldots,\deg_G(v)\}$, $\jump{v}{p}$ is the node $u$ such that $\portlabel(v,u)=p$, whereas $\portend{v}{p} = \portlabel(\jump{v}{p},v)$ is the port label at the other end of the edge.

We recall the definition of a view \cite{YamashitaK96}.
Let $G$ be a graph, $v$ be a node of $G$ and let $\portlabel$ be a port labeling for $G$.
Given any $l \geq 0$, the {\em (truncated) view up to level $l$}, $\cV_l(v)$, is defined as follows.
$\cV_0(v)$ is a tree consisting of a single node $x_0$. 
Then, $\cV_{l+1}(v)$ is the port-labeled tree rooted at $x_0$ and constructed as follows.
For every node $v_i$, $i\in\{1,\ldots,\deg_G(v)\}$, adjacent to $v$ in $G$ there is a child $x_i$ of $x_0$ in $\cV_{l+1}(v)$ such that the port number 
at $x_0$ corresponding to edge $\{x_0,x_i\}$ equals $\portlabel(v,v_i)$, and  the port number at $x_i$ corresponding to edge $\{x_0,x_i\}$ equals $\portlabel(v_i,v)$.
For each $i\in\{1,\ldots,\deg_G(v)\}$ the node $x_i$ is the root of the truncated view $\cV_l(v_i)$. 
 
The {\em view} from $v$ in $G$ is the infinite port-labeled rooted tree $\cV(v)$ such that $\cV_l(v)$ is its truncation to level $l$, for each $l\geq 0$.

We remark that by adopting the above definitions, we are considering so-called \emph{symmetric} networks in the sense that the port-labeled network corresponds to an unlabeled graph which is undirected, and  that the encoding of port numbers at both endpoints of each edge appears in the labeling of the edges of the view.

A path in $G$ is denoted as a sequence of nodes, $P=(v_0,v_1,\ldots,v_k)$, such that $\{v_0,\ldots,v_k\}\subseteq V(G)$ and $\{v_i,v_{i+1}\}$ is an edge in $G$ for each $i\in\{0,\ldots,k-1\}$. Note that nodes may repeat in a path, i.e., we do not assume that $v_i \neq v_j$ for $i\neq j$. 
We say that two paths $P_1=(u_0,u_1,\ldots,u_k)$ and $P_2=(v_0,v_1,\ldots,v_k)$ in $G$ are \emph{isomorphic} if $\portlabel(u_i,u_{i+1})=\portlabel(v_i,v_{i+1})$ and $\portlabel(u_{i+1},u_i)=\portlabel(v_{i+1},v_i)$ for each $i\in\{0,\ldots,k-1\}$. We will call a path \emph{non-backtracking}\footnote{Boldi and Vigna used in \cite{BoldiV02} the term ``non-stuttering'' to denote such paths.} if it never follows the same edge twice on end in  opposite directions, i.e., $\portlabel(v_{i},v_{i-1}) \neq \portlabel(v_{i},v_{i+1}) $ for all $i \in \{1,\ldots,k-1\}$.

\begin{claim}[\cite{YamashitaK96}] \label{claim:isomorphic-paths}
Let $G$ be a graph, let $u,v$ be two nodes of $G$, and let $l\geq 0$ be an integer.
We have $\cV_l(u)=\cV_l(v)$ if and only if, for any path of length $l$ starting at $u$, there exists an isomorphic path of length $l$ starting at $v$, and vice versa. The claim also holds when restricting considerations to non-backtracking paths.
\end{claim}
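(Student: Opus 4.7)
I plan to prove both the general and non-backtracking versions by induction on $l$. The base case $l=0$ is immediate: $\cV_0(u)$ and $\cV_0(v)$ are both single-vertex trees, and the only path of length $0$ from any node is the trivial one, isomorphic to any other single-node path.

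For the inductive step at depth $l+1$, in the forward direction I invoke the observation that siblings in a view tree have distinct port labels, so any equality $\cV_{l+1}(u)=\cV_{l+1}(v)$ is witnessed by a unique port-label-preserving isomorphism $\phi$. Any path of length $l+1$ from $u$ corresponds to a unique root-to-leaf sequence in $\cV_{l+1}(u)$, determined by its sequence of port labels; the image of this sequence under $\phi$ reads back as a path of length $l+1$ from $v$ with identical port labels, hence isomorphic. Since the non-backtracking property is entirely determined by the port label sequence, it is preserved by this correspondence.

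For the backward direction, I proceed in two steps. First, I match degrees and return ports: for each port $p\in\{1,\ldots,\deg_G(u)\}$, construct a length-$(l+1)$ path from $u$ beginning with port $p$ (extending greedily). The hypothesized isomorphic path from $v$ forces a port-$p$ neighbor $v_p=\jump{v}{p}$ with $\portlabel(v_p,v)=\portlabel(\jump{u}{p},u)$, and by symmetry $\deg_G(u)=\deg_G(v)$. Second, for each port $p$ and $u_p=\jump{u}{p}$, I deduce $\cV_l(u_p)=\cV_l(v_p)$ via the inductive hypothesis: any length-$l$ path from $u_p$ prepends to a length-$(l+1)$ path from $u$, whose isomorphic counterpart from $v$ must (by uniqueness of port $p$ at $v$) begin with the edge to $v_p$; stripping this edge yields the required isomorphic length-$l$ path from $v_p$. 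Since the subtrees at every child of the root then match, $\cV_{l+1}(u)=\cV_{l+1}(v)$.

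The main obstacle is adapting the backward direction to the non-backtracking case. Two subtleties arise: prepending the edge $(u,u_p)$ to a non-backtracking path from $u_p$ may introduce a backtrack at $u_p$ (if the first step of the path returns to $u$), and non-backtracking paths cannot always be extended to length $l+1$ through pendant neighbors. I would resolve these by a case analysis separating paths whose first step from $u_p$ returns to $u$ (handled by reducing to shorter non-backtracking segments) from those that do not (where the direct inductive argument applies), together with a base-case treatment of degree-$1$ neighbors in which the presence of a pendant at port $p$ at $u$ is forced to match one at $v$ by symmetry of the hypothesis at length $l+1$.
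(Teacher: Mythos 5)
The paper offers no proof of this claim at all --- it is imported from Yamashita--Kameda, with the non-backtracking variant credited in a footnote to Boldi--Vigna --- so there is no in-paper argument to compare against; your proposal has to stand on its own. For unrestricted paths (walks) it does: the base case, the forward direction via the unique label-preserving isomorphism and the bijection between length-$(l+1)$ walks from $u$ and root-to-depth-$(l+1)$ label sequences in $\cV_{l+1}(u)$, and the backward reduction of $\cV_{l+1}(u)=\cV_{l+1}(v)$ to degree/return-port agreement plus $\cV_l(\jump{u}{p})=\cV_l(\jump{v}{p})$ for every port $p$, are all correct and are the standard argument.

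The non-backtracking half, however, is a plan rather than a proof, and the two obstacles you name are precisely where the plan breaks. First, your inductive hypothesis speaks only about non-backtracking paths of length \emph{exactly} $l$ (resp.\ $l+1$). A non-backtracking length-$l$ path from $u_p=\jump{u}{p}$ whose first step returns to $u$ cannot be lifted to a non-backtracking length-$(l+1)$ path from $u$ through port $p$: in a simple graph the prepended edge is the very edge the path immediately retraces, so $\portlabel(u_p,u)$ appears on both sides of $u_p$ and a backtrack is created. ``Reducing to shorter non-backtracking segments'' is not licensed by the hypothesis, which says nothing about paths of length less than $l+1$; recovering the behaviour of these paths from other length-$(l+1)$ non-backtracking paths is the actual mathematical content of the non-backtracking version, and the usual route is structurally different from your tree induction --- one shows that a \emph{minimal-depth} witness of $\cV_d(u)\neq\cV_d(v)$ can be chosen non-backtracking, by excising a backtracking excursion and re-inserting it into the matching walk from the other node using $\cV_{d-1}(u)=\cV_{d-1}(v)$. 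Second, the greedy extension to length exactly $l+1$ used to match degrees genuinely fails at degree-one nodes, and ``forced to match by symmetry'' is an assertion, not an argument. A clean repair is to prove the non-backtracking variant under the assumption of minimum degree $2$ (every non-backtracking path then extends, and both obstacles vanish, so your induction goes through); this suffices for every use of the claim in this paper, since $G_l$ and its subdivisions $\subdivide{G_l}{D}$ have no degree-one nodes. As written, though, the backward direction for non-backtracking paths is a genuine gap.
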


We write $\diam{G}$ to denote the \emph{diameter} of $G$, i.e., the maximum (taken over all pairs of nodes $u$ and $v$) length of a shortest path between $u$ and $v$ in $G$.

\section{The lower bound} \label{sec:G}

For each $l>1$ we define the graph $G_l$ which consists of nodes laid out on a regular grid with $l+2$ \emph{levels} and $2^l$ \emph{columns}, where the node in level $i\in\{0,1,\ldots,l+1\}$ and column $j \in\{0,1,\ldots,2^{l}-1\}$ is denoted by $\node{i}{j}$. Note that all levels are of size $2^l$, and $n_l=|V(G_l)|=(l+2)2^l$.

The construction of the edge set of $G_l$ proceeds in four stages.
Before giving a formal construction, we first provide some intuitions regarding the purpose served by edges introduced in different stages.
The edges added to $G_l$ in Stages~2 and~3 ensure that the graph is connected and has diameter of fixed size.
The aim of Stage~3 is to add edges between consecutive levels in such a way that if one wants to detect a difference between some pairs of nodes in level $l+1$ (e.g., $v_{l+1}(0)$ and $v_{l+1}(2^{l-1})$), then two paths of sufficient length from those nodes need to be selected.
In particular, the paths first need to go through all levels and reach level $0$ (in the mentioned case, these are the nodes $v_0(0)$ and $v_0(1)$).
The edges added to $G_l$ in Stage~1 ensure that nodes in level $0$ from two consecutive columns have different views truncated to depth $2$.

\paragraph{Stage 1. Edges within level $0$.} In level $0$, the edges form a matching between nodes $v_0(j)$ and $v_0(j\oplus 1)$, $j\in\{0,\ldots,2^{l}-1\}$, with ports with labels $\{1,2\}$, given as follows:

\begin{algorithmic}
\FOR {$j :=  0,\ldots,2^{l}-1$}
   \STATE {$\portlabel(\node{0}{j},\node{0}{j\oplus 1}) := 1 + ((j+1)\modulo 2)$;}
\ENDFOR
\end{algorithmic}
In the above, $\oplus$ denotes the xor operation (bitwise modulo-2 addition of non-negative integers).

\paragraph{Stage 2. Edges within level $l+1$.} The edges in level $l+1$ form a clique on all $2^l$ nodes of the level, with port labels corresponding to the difference of identifiers of the connected nodes, computed modulo $2^l$.

\begin{algorithmic}
\FOR {$j := 0,\ldots,2^{l}-1$}
   \FOR {$p := 1,\ldots,2^{l}-1$}
        \STATE {$\portlabel(\node{l+1}{j} , \node{l+1}{(j+p)\modulo 2^{l}}) := p$.}
   \ENDFOR
\ENDFOR
\end{algorithmic}

\paragraph{Stage 3. Edges connecting level $l+1$ with all lower levels.} Each node $v_{l+1}(j)$ from level $l+1$ is connected to all nodes lying in lower levels, in the same column. The port numbers at node $v_{l+1}(j)$ leading to successive levels are successive integers starting from $2^l$, and the port numbers at the other end of such edges are always equal to $1$, except for level $0$, where the port label is either $1$ or $2$ (depending on which port was not used at the considered node in Stage 1 of the construction):

\begin{algorithmic}
\FOR {$j := 0,\ldots,2^{l}-1$}
   \FOR {$i := 0,\ldots,l$}
        \STATE {$\portlabel(\node{l+1}{j} , \node{i}{j}) := 2^l + i $;}
        \IF {$i>0$} 
				\STATE {$\portlabel(\node{i}{j} , \node{l+1}{j}) := 1 $.}
		  \ELSE 
				\STATE {$\portlabel(\node{0}{j} , \node{l+1}{j}) := 1 + (j\modulo 2) $.}
		  \ENDIF
   \ENDFOR
\ENDFOR
\end{algorithmic}

\paragraph{Stage 4. Edges connecting adjacent levels.} Each node belonging to a level $i\in\{0,\ldots, l-1\}$ is connected by an edge to exactly one node of the level $i+1$ directly above, so that the set of edges between such two adjacent levels is a matching. Specifically, we introduce a permutation $\pi_i$ on the set of integers $\{0,\ldots,2^l-1\}$, defined for $i=0$ as the identity permutation $\pi_0(j)=j$, and for $i>0$ as the involution (a function that is its own inverse) which swaps the values of the $i$-th and $(i-1)$-th rightmost bits in the binary notation of its argument:
\begin{equation}\label{eq:involution}
\pi_i(j) = (j -  2^i b_{i}(j) - 2^{i-1} b_{i-1}(j)) + 2^{i} b_{i-1}(j) + 2^{i-1} b_{i}(j),
\end{equation}
where for $k\geq 0$, $b_k(j) = 1$ if $(j \mod 2^{k+1}) \geq 2^{k}$, and $b_k(j) = 0$, otherwise.
For each node at level $i\in \{1,\ldots,l-1\}$, the port label used on the edge leading to level $i-1$ is always $2$, and the port label leading to level $i+1$ is always $3$, as follows:

\begin{algorithmic}
\FOR {$j := 0,\ldots,2^{l}-1$}
   \FOR {$i := 0,\ldots,l-1$}
	  \STATE {$\portlabel(\node{i}{j} , \node{i+1}{\pi_{i}(j)} := 3$;}
     \STATE {$\portlabel(\node{i+1}{\pi_{i}(j)} , \node{i}{j}) := 2$.}
   \ENDFOR
\ENDFOR
\end{algorithmic}

The graph $G_4$ with some edges omitted is shown in Figure~\ref{fig:Gl}.
\begin{figure}[htb]
\centering
\includegraphics[width=0.7\textwidth]{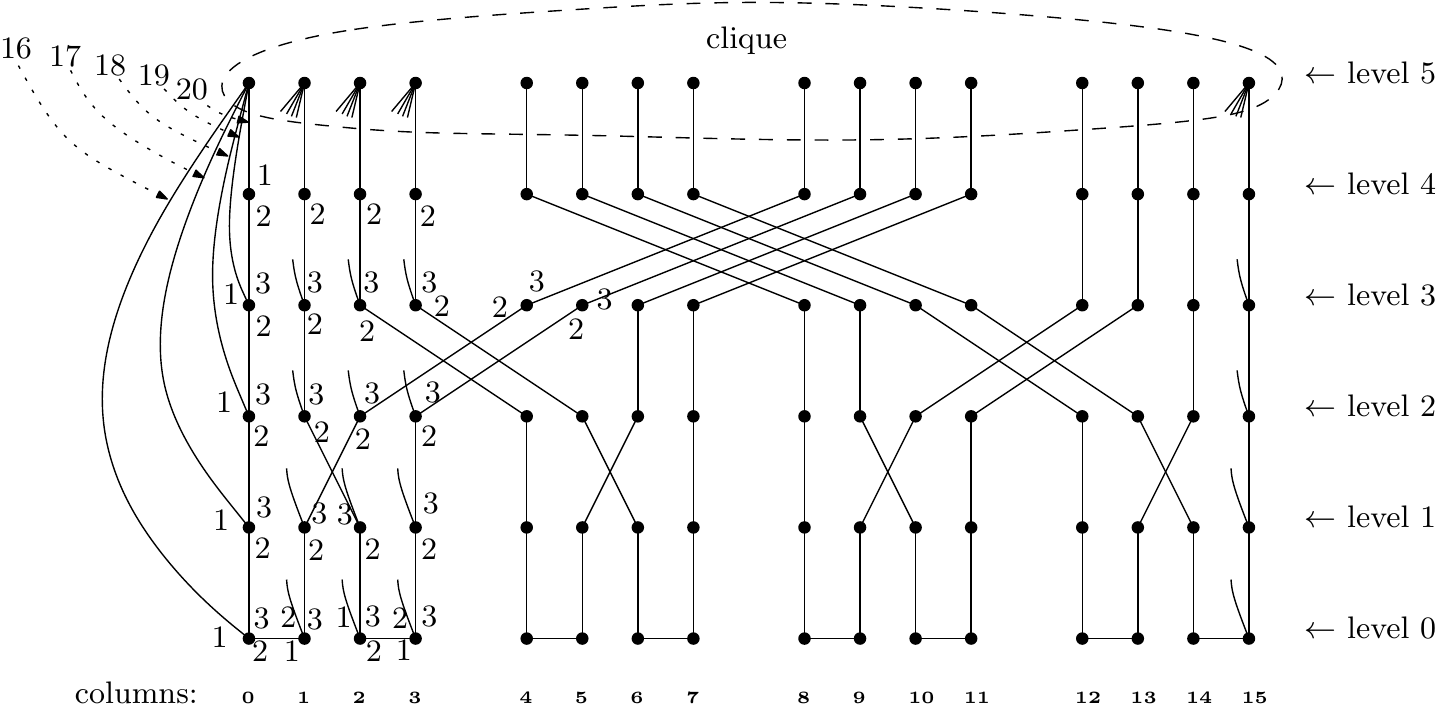}
\caption{The construction of $G_l$ for $l=4$}
\label{fig:Gl}
\end{figure}
In particular, the edges between nodes in level $l+1$ and level $i$, $i\leq l$, are given only in column 0, and edges from the clique in level $l+1$ are omitted.

\begin{claim} \label{claim:diam-of-G_l}
For each $l\geq 6$ it holds that  $|E(G_l)| < 2^{2l}$ and $\diam{G_l}\leq 3$.
\end{claim}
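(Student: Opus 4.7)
The plan is to verify the two claims separately by a direct count and a short case analysis; both are routine consequences of the construction, and the main ``care'' required is only in bookkeeping the contributions of the four stages correctly.

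For the edge count, I would tally the contribution of each stage. Stage~1 contributes a perfect matching on $2^l$ nodes, so $2^{l-1}$ edges. Stage~2 contributes a clique on the $2^l$ nodes of level $l+1$, so $\binom{2^l}{2} = 2^{l-1}(2^l-1)$ edges. Stage~3 attaches each of the $2^l$ nodes of level $l+1$ to the $l+1$ nodes in its column lying in levels $0,\ldots,l$, giving $(l+1)\,2^l$ edges. Stage~4 adds, between each pair of consecutive levels among $0,\ldots,l$, a perfect matching, contributing $l\,2^l$ edges. Summing, $|E(G_l)| = 2^{2l-1}+(2l+1)\,2^l$, and the inequality $(2l+1) < 2^{l-1}$, which holds for all $l\geq 6$, yields $|E(G_l)| < 2^{2l}$.

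For the diameter bound, I would do a case split on which levels contain the two endpoints $u$ and $v$. If both lie in level $l+1$, Stage~2 gives a direct edge, so the distance is at most $1$. If $u$ lies in level $l+1$ and $v$ lies in some lower level $i\leq l$ at column $j$, then (by Stage~3) $v_{l+1}(j)$ is adjacent to $v$, and (by Stage~2) $u$ is adjacent to $v_{l+1}(j)$ (or coincides with it), so the distance is at most $2$. Finally, if both $u$ and $v$ lie in levels at most $l$, in columns $j$ and $j'$ respectively, the path $u \to v_{l+1}(j) \to v_{l+1}(j') \to v$ has length at most $3$, using Stage~3 twice and Stage~2 once; if $j=j'$ the middle edge degenerates and the length drops to $2$. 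In every case the distance between $u$ and $v$ is at most $3$, so $\diam{G_l}\leq 3$.

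I do not foresee any real obstacle: the only place one must be slightly careful is the edge count, where Stages~1 and~2 need to be summed as a matching plus a clique (not a full clique on all nodes of the level), and the condition $l\geq 6$ is needed only to absorb the lower-order term $(2l+1)2^l$ into $2^{2l-1}$.
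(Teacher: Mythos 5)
Your proposal is correct and follows essentially the same approach as the paper, which only sketches the argument (count the edges stage by stage, and observe that every node is within distance $1$ of the clique on level $l+1$); your version simply fills in the explicit tallies, arriving at $|E(G_l)| = 2^{2l-1} + (2l+1)2^l$, and the routine case analysis for the diameter. Both the arithmetic and the case split check out.
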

\begin{proof}
The number of edges of $G_l$ can be bounded by counting the number of edges added in Stages 1 to 4 and bounding for $l\geq 6$.

To bound the diameter, note that any node of $G_l$ either belongs to level $l+1$ or is within distance $1$ from a node in level $l+1$. Also, any two nodes in level $l+1$ are adjacent.
\end{proof}

For a pair of integers $0\leq j_1, j_2 < 2^l$, we will denote by $\delta(j_1, j_2)$ the number of rightmost bits in their binary representations which are all identical, i.e., $\delta(j_1, j_2)$  is the largest integer $\delta \in \{0,\ldots,l\}$ such that $(j_1\equiv j_2) \modulo 2^{\delta}$ (or equivalently, such that $b_k(j_1) = b_k(j_2)$ for all $0\leq k <\delta$).  The function $\delta(j_1, j_2)$ has several important properties with respect to transformations of its parameters.

\begin{lemma}\label{lem:delta}
Let $j_1, j_2, d \in \{0,\ldots,2^{l}-1\}$ and $i\in\{1,\ldots,l-1\}$ be arbitrarily chosen. Then:
\begin{enumerate} [label={\normalfont(\roman*)},leftmargin=*]
\item\label{it:delta1} $\delta(j_1 \oplus d, j_2 \oplus d) = \delta(j_1,j_2)$.
\item\label{it:delta2} $\delta((j_1+d)\modulo 2^l, (j_2+d)\modulo 2^l) = \delta(j_1,j_2)$.
\item\label{it:delta3} $\delta(\pi_i (j_1), \pi_i (j_2)) \geq \delta(j_1,j_2) -1$, where involution $\pi_i$ is defined by \eqref{eq:involution}.
\end{enumerate}
\end{lemma}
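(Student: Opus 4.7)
The plan is to prove each of the three claims through direct bit-level analysis, using the observation that $\delta(j_1,j_2)$ is the largest $\delta\leq l$ with $b_k(j_1)=b_k(j_2)$ for all $k<\delta$ (equivalently, $j_1\equiv j_2\pmod{2^\delta}$).

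For part \ref{it:delta1}, I would use that XOR acts bitwise: $b_k(j\oplus d)=b_k(j)\oplus b_k(d)$. Hence for every $k$, $b_k(j_1\oplus d)=b_k(j_2\oplus d)$ iff $b_k(j_1)=b_k(j_2)$. Consequently, the \emph{set} of bit positions at which the two numbers agree is identical before and after XORing with $d$, so the length of the common rightmost run, i.e.\ $\delta$, is unchanged.

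For part \ref{it:delta2}, set $\delta=\delta(j_1,j_2)$ and argue in both directions. The easy inequality $\delta(\cdot,\cdot)\geq\delta$ comes from $j_1\equiv j_2\pmod{2^\delta}$, which gives $j_1+d\equiv j_2+d\pmod{2^\delta}$, and reduction $\modulo 2^l$ preserves residues modulo $2^\delta$ because $\delta\leq l$. For the reverse inequality (when $\delta<l$), I would track the school-book addition of $d$: since the lowest $\delta$ bits of $j_1$ and $j_2$ coincide, the carry propagated \emph{into} position $\delta$ from positions $0,\dots,\delta-1$ is the same in both additions. Thus $b_\delta(j_1+d)=b_\delta(j_1)\oplus b_\delta(d)\oplus c$ and $b_\delta(j_2+d)=b_\delta(j_2)\oplus b_\delta(d)\oplus c$ with the same carry $c$, and since $b_\delta(j_1)\neq b_\delta(j_2)$, these differ; reduction $\modulo 2^l$ preserves bit $\delta<l$. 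Hence $\delta$ is preserved exactly.

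For part \ref{it:delta3}, I would split on the value $\delta:=\delta(j_1,j_2)$ relative to $i$. The involution $\pi_i$ only swaps bits at positions $i-1$ and $i$, leaving all other bits intact. If $\delta\leq i-1$, then the differing position $\delta$ of $j_1,j_2$ is at index $\leq i-1$; if $\delta<i-1$ this index is untouched by $\pi_i$, giving $\delta(\pi_i(j_1),\pi_i(j_2))=\delta$, and if $\delta=i-1$, the bits $0,\dots,i-2$ still coincide, giving $\delta(\pi_i(j_1),\pi_i(j_2))\geq i-1=\delta$. If $\delta=i$, bits $0,\dots,i-1$ coincide (so in particular bit $i-1$ coincides, and swapping preserves this for bits $0,\dots,i-2$), but the new bit $i-1$ of $\pi_i(j_r)$ equals the old $b_i(j_r)$, and these two differ by assumption, so the lost-by-one bound is attained: $\delta(\pi_i(j_1),\pi_i(j_2))=i-1=\delta-1$. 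Finally, if $\delta\geq i+1$, bits $0,\dots,i$ agree in both arguments, the swap just exchanges two coinciding bits in each, and bit $i+1$ (and higher) is untouched, so $\delta(\pi_i(j_1),\pi_i(j_2))=\delta$. In every case $\delta(\pi_i(j_1),\pi_i(j_2))\geq\delta-1$.

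No single step here is a serious obstacle; the only slightly delicate point is the carry argument in part \ref{it:delta2}, which requires verifying that the modular reduction at the end does not interfere with the $\delta$-th bit.
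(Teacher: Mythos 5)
Your proof is correct and, for part (iii), follows essentially the same route as the paper: both arguments rest on the observation that $\pi_i$ only swaps the bits at positions $i-1$ and $i$, so $\delta$ can change only when $\delta(j_1,j_2)\in\{i-1,i\}$, and in either case the result is still at least $i-1$. The paper dismisses parts (i) and (ii) as folklore, whereas you supply full bit-level and carry-propagation arguments; these are correct and fill in exactly what the paper omits.
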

\begin{proof}
Claims \ref{it:delta1} and \ref{it:delta2} can be attributed to folklore. To prove claim \ref{it:delta3}, note that the involution $\pi_i$ consists in swapping adjacent bits at positions $i$ and $i-1$, only. Consequently, we have by definition of $\delta(j_1,j_2)$ that if $\delta(\pi_i (j_1), \pi_i (j_2)) \neq \delta(j_1,j_2)$ then either $\delta(j_1,j_2) = i-1$ or $\delta(j_1,j_2) = i$. In both cases, we have $\delta(\pi_i (j_1), \pi_i (j_2)) \geq i-1$, and claim \ref{it:delta3} follows.
\end{proof}
In the following, for two nodes $v_i(j_1)$ and $v_i(j_2)$ belonging to the same level $i$ of $G_l$, we will use the notation: $\delta(v_i(j_1),  v_i(j_2)) \equiv \delta(j_1,j_2)$.

\begin{lemma} \label{lem:difference-remains}
Consider a pair of nodes $\node{i}{j_1}$, $\node{i}{j_2}$ of $G_l$ with $\delta(\node{i}{j_1},\node{i}{j_2})>0$. Then:
\begin{enumerate} [label={\normalfont(\roman*)},leftmargin=*]
\item\label{it:diff1} Nodes $\node{i}{j_1}$ and $\node{i}{j_2}$ are of the same degree $d$.
\item\label{it:diff4} For any port $p\in \{1,\ldots,d\}$, nodes $\jump{\node{i}{j_1}}{p}$ and $\jump{\node{i}{j_2}}{p}$ belong to the same level in $G_l$. 
\item\label{it:diff2} For any port $p\in \{1,\ldots,d\}$, $\delta(\jump{\node{i}{j_1}}{p}, \jump{\node{i}{j_2}}{p} \geq \delta(\node{i}{j_1},\node{i}{j_2}) -1$.
\item\label{it:diff3} For any port $p\in \{1,\ldots,d\}$, $\portend{\node{i}{j_1}}{p} = \portend{\node{i}{j_2}}{p}$.
\end{enumerate} 
\end{lemma}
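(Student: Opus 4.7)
The plan is to proceed by case analysis on the level $i$ of the two nodes, since the four stages of the construction treat different levels differently. I would split the analysis into four regimes: (a) $i=0$, the only level on which Stage~1 acts and where the Stage~3 edge ends with a parity-dependent port; (b) $1\le i\le l-1$, with a uniform incidence pattern (port $1$ up to level $l+1$ via Stage~3, ports $2,3$ to adjacent levels via Stage~4); (c) $i=l$, similar to (b) but with only the downward Stage~4 edge; (d) $i=l+1$, the top level with a full clique from Stage~2 and downward edges from Stage~3.

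In each regime I would simply read off, as a function of $j$, the degree of $\node{i}{j}$, the level and column of $\jump{\node{i}{j}}{p}$ for every port $p$, and the port label $\portend{\node{i}{j}}{p}$. The columns of neighbours turn out to be given by one of the maps $j\mapsto j$, $j\mapsto j\oplus 1$, $j\mapsto (j+p)\modulo 2^l$, $j\mapsto \pi_{i-1}(j)$ or $j\mapsto \pi_i(j)$, so claim \ref{it:diff2} reduces immediately to the corresponding part of Lemma~\ref{lem:delta}: part \ref{it:delta1} for the xor map, part \ref{it:delta2} for the cyclic shift in level $l+1$, and part \ref{it:delta3} for the involutions used in Stage~4 (observing that $\pi_0$ is the identity, so no loss in $\delta$ occurs at the bottom transition). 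Claims \ref{it:diff1}, \ref{it:diff4} and \ref{it:diff3} are then immediate because the answers to degree, neighbour level, and port label at the other end either do not depend on $j$ at all, or depend only on its parity.

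The only step requiring care is level~$0$, which is the unique regime in which port labels at $\node{i}{j}$ itself depend on $j$: Stage~1 gives port $1+((j+1)\modulo 2)$ at $\node{0}{j}$ and Stage~3 gives port $1+(j\modulo 2)$ at $\node{0}{j}$. The hypothesis $\delta(j_1,j_2)>0$ forces $j_1\equiv j_2 \pmod 2$, so these parity-dependent labels agree for the two nodes, and \ref{it:diff1}, \ref{it:diff4}, \ref{it:diff3} still follow. The same parity argument takes care of the one remaining $j$-dependent port visible from the other regimes, namely the port at the level-$0$ endpoint of the Stage~3 edge incident to $\node{l+1}{j}$.

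I expect the main obstacle to be purely organisational: verifying that the list of Stage~1--4 edges incident to $\node{i}{j}$ is exhaustive in each regime, so that no phantom port is overlooked, and keeping track of where the involutions $\pi_{i-1}, \pi_i$ act versus where $\pi_0=\mathrm{id}$ gives an equality rather than the weaker inequality of Lemma~\ref{lem:delta}\ref{it:delta3}. Beyond this bookkeeping, no nontrivial mathematical argument is required.
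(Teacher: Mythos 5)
Your proposal is correct and in substance identical to the paper's proof: both arguments reduce claim \ref{it:diff2} to Lemma~\ref{lem:delta} via the column maps ($j\mapsto j\oplus 1$, $j\mapsto (j+p)\modulo 2^l$, $j\mapsto\pi_i(j)$), and both obtain claims \ref{it:diff1}, \ref{it:diff4}, \ref{it:diff3} from the observation that degrees, neighbour levels and far-end port labels depend on $j$ only through its parity, which is fixed by $\delta(j_1,j_2)>0$. The only difference is organisational: the paper cases on the construction Stage in which each edge was created (first noting that this Stage is determined by the level and the parity of the column), whereas you case on the level of the node --- a transposition of the same exhaustive enumeration.
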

\begin{proof}
By the construction of $G_l$, all nodes in the same level are of the same degree, and claim \ref{it:diff1} follows.
Claim \ref{it:diff4} also follows directly from the construction of $G_l$.

The construction of the port labeling in $G_l$ is such that the Stage $a \in \{1,2,3,4\}$, during which an edge along any port $p$ is added to a vertex $\node{i}{j}$, depends only on the value of its level $i$ and the parity $j \modulo 2$ of its column number (this parity is only relevant for the case of $i=0$ and $p=2$, distinguishing edges added in Stage 1 and Stage 3). The nodes $\node{i}{j_1}$ and $\node{i}{j_2}$ belong to the same level. Moreover, since $\delta(j_1,j_2)>0$, we have that
\begin{equation} \label{eq:equiv}
 (j_1 \equiv j_2)\mod 2.
\end{equation}
It follows that the edges $e_1 = \{\node{i}{j_1}, \jump{\node{i}{j_1}}{p}\}$ and $e_2 = \{\node{i}{j_2}, \jump{\node{i}{j_2}}{p}\}$, corresponding to a traversal of the same port $p$ starting from nodes $\node{i}{j_1}$ and $\node{i}{j_2}$, must necessarily have been defined in the same Stage $a$ of the construction of the edge set of $G_l$. To complete the proofs of claims \ref{it:diff2} and \ref{it:diff3}, we consider the corresponding four cases of $a\in \{1,2,3,4\}$.

\begin{itemize}
\item Edges $e_1$ and $e_2$ were defined in Stage 1. Then, $i=0$, $p\in\{1,2\}$, and we have: 
$$\jump{\node{i}{j_1}}{p} = \node{i}{j_1 \oplus 1}, \quad 
\portend{\node{i}{j_1}}{p} = 1 + (j_1 \modulo 2),$$
$$\jump{\node{i}{j_2}}{p} = \node{i}{j_2 \oplus 1}, \quad 
\portend{\node{i}{j_2}}{p} = 1 + (j_2 \modulo 2).$$
By Lemma~\ref{lem:delta}\ref{it:delta1}, we have:
$$
\delta(\jump{\node{i}{j_1}}{p}, \jump{\node{i}{j_2}}{p} = \delta(\node{i}{j_1},\node{i}{j_2}).
$$
Moreover, taking into account \eqref{eq:equiv}, we obtain $\portend{\node{i}{j_1}}{p} = \portend{\node{i}{j_2}}{p}$. This completes the proof of claims \ref{it:diff2} and \ref{it:diff3} for this case.

\item Edges $e_1$ and $e_2$ were defined in Stage 2. Then, $i=l+1$, $p\in\{1,\ldots,2^l-1\}$, and we have: 
$$\jump{\node{i}{j_1}}{p} = \node{i}{(j_1 +p ) \modulo 2^l}, \quad 
\portend{\node{i}{j_1}}{p} = (2^l-p) \modulo 2^l,$$
$$\jump{\node{i}{j_2}}{p} = \node{i}{(j_2 +p ) \modulo 2^l}, \quad 
\portend{\node{i}{j_2}}{p} = (2^l-p) \modulo 2^l.$$
We immediately have $\portend{\node{i}{j_1}}{p} = \portend{\node{i}{j_2}}{p}$, and moreover, by Lemma~\ref{lem:delta}\ref{it:delta2}:
$$
\delta(\jump{\node{i}{j_1}}{p}, \jump{\node{i}{j_2}}{p} = \delta(\node{i}{j_1},\node{i}{j_2}).
$$

\item Edges $e_1$ and $e_2$ were defined in Stage 3. Then, we need to consider two cases: either $i = l+1$, or $i\in\{0,\ldots l\}$. 

If $i=l+1$, then $p = 2^l + i'$ for some $i' \in \{0,\ldots l\}$. We have for $i'>0$:
$$\jump{\node{i}{j_1}}{p} = \node{i'}{j_1}, \quad 
\portend{\node{i}{j_1}}{p} = 1,$$
$$\jump{\node{i}{j_2}}{p} = \node{i'}{j_2}, \quad 
\portend{\node{i}{j_2}}{p} = 1,$$
whereas for $i'=0$:
$$\jump{\node{i}{j_1}}{p} = \node{0}{j_1}, \quad 
\portend{\node{i}{j_1}}{p} = 1 + (j_1 \modulo 2),$$
$$\jump{\node{i}{j_2}}{p} = \node{0}{j_2}, \quad 
\portend{\node{i}{j_2}}{p} = 1 + (j_2 \modulo 2).$$
Claims \ref{it:diff2} and \ref{it:diff3} follow directly, taking into account Equation~\eqref{eq:equiv} in the latter case.

Otherwise, if $i<l+1$, then $p = 2$ (if $i=0$ and $j_1\equiv j_2\equiv 1 \mod 2$), or $p=1$ (in all other cases). We have:
$$\jump{\node{i}{j_1}}{p} = \node{l+1}{j_1}, \quad 
\portend{\node{i}{j_1}}{p} = 2^l + i,$$
$$\jump{\node{i}{j_2}}{p} = \node{l+1}{j_2}, \quad 
\portend{\node{i}{j_2}}{p} = 2^l + i,$$
and claims \ref{it:diff2} and \ref{it:diff3} immediately follow as well.

\item Edges $e_1$ and $e_2$ were defined in Stage 4. Then, $p\in \{2,3\}$ and $i\in\{0,\ldots,l\}$.

We first consider the case of $p=3$, i.e., when $i < l$ and port $p$ leads up to level $i+1$. We have:
$$\jump{\node{i}{j_1}}{p} = \node{i+1}{\pi_{i}(j_1)}, \quad 
\portend{\node{i}{j_1}}{p} = 2,$$
$$\jump{\node{i}{j_2}}{p} = \node{i+1}{\pi_{i}(j_2)}, \quad 
\portend{\node{i}{j_2}}{p} = 2.$$
Claim \ref{it:diff3} follows directly, and so does claim \ref{it:diff2}, taking into account that by Lemma~\ref{lem:delta}\ref{it:delta3}:
$$
\delta(\jump{\node{i}{j_1}}{p}, \jump{\node{i}{j_2}}{p} = \delta(\pi_{i}(j_1),\pi_{i}(j_2))\geq \delta(j_1,j_2)-1 = \delta(\node{i}{j_1},\node{i}{j_2})-1.
$$

In the case of $p=2$, i.e., when $i > 0$ and port $p$ leads down to level $i-1$, we have:
$$\jump{\node{i}{j_1}}{p} = \node{i-1}{\pi_{i-1}^{-1}(j_1)}, \quad 
\portend{\node{i}{j_1}}{p} = 3,$$
$$\jump{\node{i}{j_2}}{p} = \node{i-1}{\pi_{i-1}^{-1}(j_2)}, \quad 
\portend{\node{i}{j_2}}{p} = 3.$$
We obtain the claims as in the previous case, this time noting that since $\pi_{i-1}$ is an involution, we have $\pi_{i-1}^{-1} \equiv \pi_{i-1}$, and we can apply Lemma~\ref{lem:delta}\ref{it:delta3} for $\pi_{i-1}$ to show Claim \ref{it:diff2}.
\end{itemize}
\vspace{-7mm}
\end{proof}

\begin{lemma} \label{lem:truncated_depth}
Consider a pair of nodes $\node{i}{j_1}$, $\node{i}{j_2}$ of $G_l$ with $\delta \equiv \delta(\node{i}{j_1},\node{i}{j_2}) >0$. Then, the views of nodes $\node{i}{j_1}$ and $\node{i}{j_2}$ are equal at least up to depth $\delta$, $\cV_\delta(\node{i}{j_1}) = \cV_\delta(\node{i}{j_2})$.
\end{lemma}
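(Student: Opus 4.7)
The plan is to argue by induction on $\delta$, leveraging Lemma~\ref{lem:difference-remains} as a black box together with the recursive definition of the view (or equivalently, Claim~\ref{claim:isomorphic-paths}).

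For the base case $\delta = 1$, I only need $\cV_1(\node{i}{j_1}) = \cV_1(\node{i}{j_2})$. By Lemma~\ref{lem:difference-remains}\ref{it:diff1}, both nodes have the same degree $d$, so their truncated views to depth $1$ consist of a root with exactly $d$ children, indexed by ports $p \in \{1,\ldots,d\}$. The port label on the root side is $p$ in both trees by construction, and Lemma~\ref{lem:difference-remains}\ref{it:diff3} gives equality of the port labels on the child side. So the depth-$1$ views coincide.

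For the inductive step, suppose the lemma holds whenever the parameter $\delta$ is replaced by $\delta-1 \geq 1$, and consider $\node{i}{j_1}$, $\node{i}{j_2}$ with $\delta(\node{i}{j_1},\node{i}{j_2}) = \delta$. By the recursive definition of the view, $\cV_\delta(\node{i}{j_k})$ is the port-labeled tree rooted at a fresh node with one child (through the appropriate port pair) for each port $p \in \{1,\ldots,d\}$, and the subtree rooted at the $p$-th child is $\cV_{\delta-1}(\jump{\node{i}{j_k}}{p})$. Lemma~\ref{lem:difference-remains}\ref{it:diff1} and \ref{it:diff3} again match up the degrees and the port labels at both endpoints. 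It remains to see that, port by port, the children carry identical subtrees. By Lemma~\ref{lem:difference-remains}\ref{it:diff4}, the two nodes $\jump{\node{i}{j_1}}{p}$ and $\jump{\node{i}{j_2}}{p}$ lie in the same level of $G_l$, so the notation $\delta(\cdot,\cdot)$ is well defined on them, and by Lemma~\ref{lem:difference-remains}\ref{it:diff2} we have $\delta(\jump{\node{i}{j_1}}{p}, \jump{\node{i}{j_2}}{p}) \geq \delta - 1 > 0$. The inductive hypothesis therefore yields $\cV_{\delta-1}(\jump{\node{i}{j_1}}{p}) = \cV_{\delta-1}(\jump{\node{i}{j_2}}{p})$, which is exactly what is needed to glue the depth-$\delta$ views together.

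The argument is essentially routine once Lemma~\ref{lem:difference-remains} is in hand; the only subtlety, and the step I would be most careful about, is tracking that the parameter $\delta$ really drops by at most $1$ per port traversal (so induction does not run out before the required depth) and, in the base case, that the degree-alignment covers all four construction stages---both of which are handed to us directly by Lemma~\ref{lem:difference-remains}. One could alternatively phrase the same induction in terms of non-backtracking paths via Claim~\ref{claim:isomorphic-paths}, building an isomorphic path in lockstep from $\node{i}{j_2}$ by following the same port sequence as a given path from $\node{i}{j_1}$, but the recursive-view formulation above is the most direct.
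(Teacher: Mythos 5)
Your proposal is correct and follows essentially the same route as the paper: an induction on $\delta$ whose base case uses Lemma~\ref{lem:difference-remains}\ref{it:diff1} and \ref{it:diff3} to match degrees and return ports, and whose inductive step uses \ref{it:diff4} and \ref{it:diff2} to apply the hypothesis to the depth-$(\delta-1)$ views of corresponding neighbours before gluing via the recursive definition of the view. No substantive differences.
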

\begin{proof}
The proof proceeds by induction with respect to $\delta$. 

When $\delta =1$, by Lemma~\ref{lem:difference-remains}\ref{it:diff1}, the nodes $\node{i}{j_1}$  and $\node{i}{j_2}$ have the same degree $d$, and by Lemma~\ref{lem:difference-remains}\ref{it:diff3}, after traversing an edge labeled with any port $p\in\{1,\ldots,d\}$ from either node, we enter the adjacent node by the same port: $\portend{\node{i}{j_1}}{p} = \portend{\node{i}{j_2}}{p}$.
Hence, $\cV_1(\node{i}{j_1}) = \cV_1(\node{i}{j_2})$.

Now, let $\delta > 1$ and suppose that the claim of the lemma holds for all $\delta' \leq \delta -1$. Again, by Lemma~\ref{lem:difference-remains}\ref{it:diff1} and \ref{it:diff3}, the nodes $\node{i}{j_1}$  and $\node{i}{j_2}$ have the same degree, and after traversing an edge labeled with any port $p\in\{1,\ldots,d\}$ from either node, we enter the adjacent node by the same port.
Moreover, we have by Lemma~\ref{lem:difference-remains}\ref{it:diff2} that $\delta(\jump{\node{i}{j_1}}{p}, \jump{\node{i}{j_2}}{p} \geq \delta -1$,
and, by Lemma~\ref{lem:difference-remains}\ref{it:diff4}, $\jump{\node{i}{j_1}}{p}$ and $\jump{\node{i}{j_2}}{p}$ belong to the same level of $G_l$.
Hence, by the inductive assumption, $\cV_{\delta-1}(\jump{\node{i}{j_1}}{p}) = \cV_{\delta-1}(\jump{\node{i}{j_2}}{p})$. Since port $p$ was arbitrarily chosen, it follows from the recursive definition of the view that $\cV_{\delta}(\node{i}{j_1}) = \cV_{\delta}(\node{i}{j_2})$, and so we have the claim.
\end{proof}

Observe that the nodes $a_l = v_{l}(0)$ and $b_l = v_{l}(2^{l-1})$ have distinct views in $G_l$. Indeed, consider a sequence of $l$ traversals along port $2$, starting from nodes $a_l$ and $b_l$.
We argue, by induction on $i\in\{0,\ldots,l-1\}$, that after $i$ edge traversals the node reached from $a_l$ is $v_{l-i}(0)$, and the node reached from $b_l$ is $v_{l-i}(2^{l-1-i})$.
For $i=0$ the claim is trivial and hence assume that it holds for some $0\leq i<l-1$.
The edge with port number $2$ at $v_{l-i}(0)$ clearly leads to $v_{l-1-i}(0)$ as required.
Hence, it remains to argue that there is an edge between $v_{l-i-1}(2^{l-2-i})$ and $v_{l-i}(2^{l-1-i})$ in $G$.
According to construction of edges between the levels $l-2-i$ and $l-1-i$ in Stage~4, we need to argue that
\begin{equation} \label{eq:ind-dist}
\pi_{l-1-i}(j)=2^{l-1-i},\quad\textup{where }j=2^{l-2-i}.
\end{equation}
By~\eqref{eq:involution},
$$\pi_{l-1-i}(j)=\left(j-2^{l-1-i}b_{l-1-i}(j) - 2^{l-2-i}b_{l-2-i}(j)\right) + 2^{l-1-i}b_{l-2-i}(j) + 2^{l-2-i}b_{l-1-i}(j).$$
We have $b_{l-1-i}(j)=0$ because $2^{l-2-i}\mod 2^{l-i}=0<2^{l-1-i}$, and $b_{l-2-i}(j)=1$ because $2^{l-2-i}\mod 2^{l-1-i}\geq 2^{l-2-i}$.
Thus, $\pi_{l-1-i}(j)=j-2^{l-2-i}+2^{l-1-i}=2^{l-1-i}$ as required, which completes the proof of \eqref{eq:ind-dist}.
Thus, for $i = l-1$, we reach nodes $v_1(0)$ and $v_1(1)$, respectively. Then, after following port $2$ for the $l$-th time, we reach nodes $v_{0}(0)$ and $v_{0}(1)$, respectively. Finally, after following port $2$ for the $(l+1)-$th time, We reach nodes $v_{0}(1)$ and $v_{l+1}(1)$, respectively.

In the last step of the traversal of this sequence of ports, node $v_{0}(1)$ is entered by port $1$, while node $v_{l+1}(1)$ is entered by port $2^l$.
Hence, $\cV(a_l) \neq \cV(b_l)$. On the other hand, $\delta(v_{l}(0), v_{l}(2^{l-1})) = l-1$, so by Lemma~\ref{lem:truncated_depth}, $\cV_{l-1}(a_l) = \cV_{l-1}(b_l)$. We obtain the following claim.

\begin{proposition}\label{pro1}
For any integer $l\geq 6$, there exists a graph $G_l$ on $(l+2)2^l$ nodes, at most $2^{2l}$ edges, and diameter at most $3$, which contains a pair of nodes $a_l$, $b_l$ having distinct views and having the same views up to depth $l-1$.\qed
\end{proposition}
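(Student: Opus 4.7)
The plan is to assemble the facts already established in this section. The vertex count $(l+2)2^l$ is immediate from the construction, the edge bound $|E(G_l)| < 2^{2l}$ and the diameter bound $\diam{G_l} \leq 3$ are exactly Claim~\ref{claim:diam-of-G_l}, and Claim~\ref{claim:isomorphic-paths} reduces the task of proving $\cV(a_l) \neq \cV(b_l)$ to exhibiting a single path out of $a_l$ that admits no isomorphic counterpart out of $b_l$.

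I would take $a_l = \node{l}{0}$ and $b_l = \node{l}{2^{l-1}}$. For the easy direction, note that the rightmost $l-1$ bits of $0$ and $2^{l-1}$ are all zero, so $\delta(a_l, b_l) = l-1$, and Lemma~\ref{lem:truncated_depth} directly yields $\cV_{l-1}(a_l) = \cV_{l-1}(b_l)$.

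For the harder direction, the plan is to follow port $2$ exactly $l+1$ times from each of the two nodes and show that the two resulting paths are not isomorphic. By induction on $i \in \{0,\ldots,l-1\}$, after $i$ port-$2$ traversals the walker starting from $a_l$ is at $\node{l-i}{0}$ and the walker starting from $b_l$ is at $\node{l-i}{2^{l-1-i}}$. For the inductive step, port $2$ at any level-$(l-i)$ node $\node{l-i}{j}$ was introduced in Stage~4 and leads to $\node{l-i-1}{\pi_{l-1-i}^{-1}(j)}$; since $\pi_{l-1-i}$ is an involution, the fact that it fixes $0$ handles the $a_l$-side, while for the $b_l$-side I would verify directly from~\eqref{eq:involution} that $\pi_{l-1-i}(2^{l-2-i}) = 2^{l-1-i}$, which reduces to checking $b_{l-1-i}(2^{l-2-i}) = 0$ and $b_{l-2-i}(2^{l-2-i}) = 1$. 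After $l$ steps the walkers sit at $\node{0}{0}$ and $\node{0}{1}$ respectively; one further port-$2$ traversal uses the Stage~$1$ edge at $\node{0}{0}$ (landing at $\node{0}{1}$, entered by port $1$) and the Stage~$3$ edge at $\node{0}{1}$ (landing at $\node{l+1}{1}$, entered by port $2^l$). Since the last edges of the two walks are entered via different ports, the two length-$(l+1)$ walks are not isomorphic, and $\cV(a_l) \neq \cV(b_l)$ by Claim~\ref{claim:isomorphic-paths}.

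The main obstacle is the bit-level verification $\pi_{l-1-i}(2^{l-2-i}) = 2^{l-1-i}$ inside the inductive step; the rest of the proof is just a matter of reading off port assignments from the four stages of the construction and quoting Claim~\ref{claim:diam-of-G_l}, Lemma~\ref{lem:truncated_depth}, and Claim~\ref{claim:isomorphic-paths}.
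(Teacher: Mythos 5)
Your proposal is correct and follows essentially the same route as the paper: it invokes Claim~\ref{claim:diam-of-G_l} for the size and diameter bounds, uses $\delta(a_l,b_l)=l-1$ with Lemma~\ref{lem:truncated_depth} for the equality of truncated views, and separates the views by following port $2$ for $l+1$ steps, with the same inductive verification $\pi_{l-1-i}(2^{l-2-i})=2^{l-1-i}$ and the same final discrepancy (entry port $1$ versus $2^l$). No gaps.
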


This result completes our proof for the case of graphs of diameter $3$. Now, in order to obtain an asymptotic lower bound of $\Omega(D\log(n/D))$, where $n$ and $D$ are, respectively, the size and the diameter of a graph, we modify each of $G_l$'s to obtain graphs of arbitrarily large diameter.

Let $D$ be an odd integer.
For each $G_l$, $l\geq 1$, define $\subdivide{G_l}{D}$ to be a graph constructed by replacing each edge $\{u,v\}$ from $G_l$ by a path $P(\{u,v\})$ of length $D$ with endpoints $u$ and $v$.
Note that $|V(\subdivide{G_l}{D})|=|V(G_l)|+(D-1)|E(G_l)|$ and $|E(\subdivide{G}{D})|=D|E(G_l)|$.
Also, $\subdivide{G_l}{1}=G_l$.
We define the port labeling $\portlabel_D$ for $\subdivide{G_l}{D}$ as follows.
For each $\{u,v\}\in E(G_l)$ take the corresponding path $P(\{u,v\})=(u,x_1,\ldots,x_{D-1},v)$ and set
$\portlabel_D(u,x_1)=\portlabel(u,v)$,
$\portlabel_D(v,x_{D-1})=\portlabel(v,u)$.
The remaining port labels of $P(\{u,v\})$ are assigned arbitrarily but in such a way that whenever two edges of $G_l$ have the same port labels at the endpoints, then we select isomorphic port labelings for the two corresponding paths in $\subdivide{G_l}{D}$. Formally, for any two edges $\{u,v\}$ and $\{u',v'\}$ of $G_l$ satisfying
$\portlabel(u,v)=\portlabel(u',v')$ and $\portlabel(v,u)=\portlabel(v',u')$, for the two corresponding paths $P(\{u,v\})=(u=x_0,x_1,\ldots,x_{D-1},x_D=v)$ and $P(\{u',v'\})=(u'=x_0',x_1',\ldots,x_{D-1}',x_D'=v')$ it holds that
$\portlabel_D(x_j,x_{j+1})=\portlabel_D(x_j',x_{j+1}')$ and
$\portlabel_D(x_{j+1},x_{j})=\portlabel_D(x_{j+1}',x_{j}')$ for each $j\in\{0,\ldots,D-1\}$.
The latter is possible for any $D$ when $\portlabel(u,v)\neq\portlabel(v,u)$ and it is possible for odd $D$ for `symmetric' edges, i.e., when $\portlabel(u,v)=\portlabel(v,u)$.
As an example of such labeling consider the following.
If $D$ is odd and $\portlabel(u,v)=\portlabel(v,u)$, then we set
$$\portlabel_D(x_j,x_{j-1})=1\textup{ and }\portlabel_D(x_j,x_{j+1})=2\textup{ for each }j\in\{1,\ldots,\lfloor D/2\rfloor\},$$
and 
$$\portlabel_D(x_j,x_{j-1})=2\textup{ and }\portlabel_D(x_j,x_{j+1})=1\textup{ for each }j\in\{\lfloor D/2\rfloor+1,\ldots,D-1\}.$$
If, on the other hand, $\portlabel(u,v)\neq\portlabel(v,u)$, then one can set
$$\portlabel_D(x_j,x_{j-1})=1\textup{ and }\portlabel_D(x_j,x_{j+1})=2\textup{ for each }j\in\{1,\ldots,D-1\}.$$
We also have the following claim.

\begin{claim} \label{claim:new_diam}
For each $l\geq 1$ and $D\geq 1$ it holds that $\diam{\subdivide{G_l}{D}}\leq 3D$.\qed
\end{claim}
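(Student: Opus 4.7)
The plan is to bound $\diam{\subdivide{G_l}{D}}$ by combining the bound $\diam{G_l}\leq 3$ (established in the proof of Claim~\ref{claim:diam-of-G_l}, whose diameter argument in fact holds for every $l\geq 1$, as it only invokes the clique on level $l{+}1$ and the Stage~3 edges joining every non-clique vertex to its column-mate in that clique, not the edge-count bound that required $l\geq 6$) with the structure of the subdivision operation.

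First, I would observe that for any two original vertices $x,y\in V(G_l)\subseteq V(\subdivide{G_l}{D})$, the distance $d_{\subdivide{G_l}{D}}(x,y)$ equals $D\cdot d_{G_l}(x,y)$: every edge of $G_l$ becomes an internally disjoint path of length $D$ in $\subdivide{G_l}{D}$, whose only vertices of degree different from $2$ are its two endpoints, so any walk between original vertices in $\subdivide{G_l}{D}$ decomposes into complete length-$D$ subpaths and corresponds to a walk in $G_l$ that is $D$ times shorter. Combined with $\diam{G_l}\leq 3$, this yields $d_{\subdivide{G_l}{D}}(x,y)\leq 3D$.

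Second, I would handle arbitrary (possibly subdivision) vertices by projecting each to a nearby original vertex. Define $\phi:V(\subdivide{G_l}{D})\to V(G_l)$ by $\phi(w')=w'$ when $w'\in V(G_l)$, and otherwise by letting $\phi(w')$ be whichever of the two endpoints of the length-$D$ subdivision path containing $w'$ lies nearer, so that $d(w',\phi(w'))\leq \lfloor D/2\rfloor$ in $\subdivide{G_l}{D}$. Applying the triangle inequality via $\phi(u'),\phi(v')$ together with the first observation gives
\begin{equation*}
d(u',v')\;\leq\; d(u',\phi(u'))\,+\,3D\,+\,d(\phi(v'),v')\;\leq\; 3D + 2\lfloor D/2\rfloor
\end{equation*}
for arbitrary $u',v'\in V(\subdivide{G_l}{D})$, a bound linear in $D$.

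The main obstacle I anticipate is tightening this straightforward bound of at most $4D$ to the stated $3D$. I expect this will require exploiting the structural property used in the proof of Claim~\ref{claim:diam-of-G_l} more aggressively: every original vertex is within a single $G_l$-edge of the level-$(l{+}1)$ clique, so one can always route $u'\to(\text{clique vertex})\to(\text{clique vertex})\to v'$. By carefully choosing which endpoint of each subdivision path to pass through (based on its proximity to the clique) and by distinguishing cases according to whether the edges containing $u'$ and $v'$ were introduced in Stage~2 (so both endpoints are already in the clique) or elsewhere, one should be able to absorb the ``subdivision slack'' of $\lfloor D/2\rfloor$ on each side into the $3D$ detour budget, or else relax the claimed bound to a slightly weaker $O(D)$ estimate that still suffices for the main result.
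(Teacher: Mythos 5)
The paper offers no proof of this claim at all (it is stated with an immediate \qed), so the rigorous content of your proposal --- the identity $d_{\subdivide{G_l}{D}}(x,y)=D\cdot d_{G_l}(x,y)$ for original vertices and the resulting bound $3D+2\lfloor D/2\rfloor\leq 4D$ for arbitrary vertices --- is already more than the paper provides, and it is correct. However, your instinct that the remaining gap from $4D$ down to $3D$ is the crux is right in a stronger sense than you anticipate: the tightening you hope for in your last paragraph is impossible, because the bound $3D$ is false for $D\geq 2$ and $l\geq 2$. Concretely, take the two Stage-1 matching edges $e_1=\{\node{0}{0},\node{0}{1}\}$ and $e_2=\{\node{0}{2},\node{0}{3}\}$ in level $0$. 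All four cross distances $d_{G_l}(\node{0}{j},\node{0}{j'})$ with $j\in\{0,1\}$, $j'\in\{2,3\}$ equal $3$: for instance, the neighborhood $\{\node{0}{1},\node{l+1}{0},\node{1}{0}\}$ of $\node{0}{0}$ and the neighborhood $\{\node{0}{3},\node{l+1}{2},\node{1}{2}\}$ of $\node{0}{2}$ are disjoint and neither contains the other center, and likewise for the remaining pairs. Since every internal vertex of a subdivision path has degree $2$, a path leaving $P(e_1)$ must exit through an endpoint of $e_1$, so for $D$ odd and $u'$, $v'$ the middle vertices of $P(e_1)$ and $P(e_2)$ one gets $d(u',v')=\min_{x,y}\left(d(u',x)+3D+d(y,v')\right)=\frac{D-1}{2}+3D+\frac{D-1}{2}=4D-1>3D$, the minimum being over endpoints $x$ of $e_1$ and $y$ of $e_2$. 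So no choice of routing through the level-$(l+1)$ clique can rescue the stated constant; your triangle-inequality bound is in fact tight up to an additive $1$.

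Consequently your fallback is the correct resolution, not a concession: one should prove $\diam{\subdivide{G_l}{D}}\leq 4D$ (your argument already does this) and propagate the change into the final theorem by choosing $D$ as the largest odd integer with $4D\leq D'$ rather than $3D\leq D'$. This only worsens the explicit constants in the bound $\frac{D'-5}{6}\log_2\frac{n'}{D'}-0.41D'$ and leaves the asymptotic $\Omega(D\log(n/D))$ lower bound intact, since the view-separation depth $D(l-1)$ from Lemma~\ref{lem:diff-in-dG_l} is unaffected. One further point in your favour: you are right that the diameter part of Claim~\ref{claim:diam-of-G_l} (every vertex lies within one edge of the level-$(l+1)$ clique, which is itself complete) uses nothing about $l$ beyond the graph being defined, so invoking $\diam{G_l}\leq 3$ here is unproblematic; the hypothesis $l\geq 6$ there is only needed for the edge count.
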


We now consider the nodes $a_l, b_l \in V(G_l)$ satisfying Proposition~\ref{pro1}, and characterize their (truncated) views within graph $\subdivide{G_l}{D}$.

\begin{lemma} \label{lem:diff-in-dG_l}
For any $l\geq 1$, $i\leq l-1$, and odd $D\geq 1$, in graph $\subdivide{G_l}{D}$ we have:  $\cV_{Di}(a_l)=\cV_{Di}(b_l)$ and $\cV(a_l)\neq\cV(b_l)$.
\end{lemma}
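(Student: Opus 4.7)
The strategy is to reduce both assertions to facts about $G_l$ that have already been established, via a canonical bijection between walks in $G_l$ and scaled walks in $\subdivide{G_l}{D}$. Throughout I appeal to Claim~\ref{claim:isomorphic-paths} in its non-backtracking form.

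Every internal vertex of a subdivision path $P(\{u,v\})$ in $\subdivide{G_l}{D}$ has degree $2$, so a non-backtracking walk entering such a vertex is forced to continue to the unique other neighbour. A non-backtracking walk starting at an original vertex therefore makes genuine choices only at original vertices, and between two consecutive such choices it must traverse an entire subdivision path in exactly $D$ steps. This yields a length-respecting bijection between non-backtracking walks of length $Dk$ in $\subdivide{G_l}{D}$ starting at an original vertex, and non-backtracking walks of length $k$ in $G_l$ starting at the same vertex. Moreover, the construction of $\portlabel_D$ ensures that any two $G_l$-edges with identical endpoint port-labels are replaced by port-isomorphic subdivision paths. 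Consequently, the sequence of incoming/outgoing port pairs along a walk in $\subdivide{G_l}{D}$ is completely determined by the port pairs along the corresponding walk in $G_l$, and two such walks in $\subdivide{G_l}{D}$ are isomorphic if and only if their $G_l$-counterparts are isomorphic.

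For the first claim $\cV_{Di}(a_l)=\cV_{Di}(b_l)$, take $k=i$. Since the column indices of $a_l=v_l(0)$ and $b_l=v_l(2^{l-1})$ differ only in the $l$-th rightmost bit, $\delta(a_l,b_l)=l-1\geq i$, so Lemma~\ref{lem:truncated_depth} gives $\cV_i(a_l)=\cV_i(b_l)$ in $G_l$. By Claim~\ref{claim:isomorphic-paths}, every non-backtracking $G_l$-walk of length $i$ from $a_l$ admits an isomorphic one from $b_l$, and vice versa; transporting through the bijection yields the analogous statement for non-backtracking $\subdivide{G_l}{D}$-walks of length $Di$, and a final application of Claim~\ref{claim:isomorphic-paths} gives $\cV_{Di}(a_l)=\cV_{Di}(b_l)$. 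For the second claim $\cV(a_l)\neq\cV(b_l)$, take $k=l+1$ and invoke the $(l+1)$-fold port-$2$ iteration exhibited just before Proposition~\ref{pro1}: that walk from $a_l$ in $G_l$ ends by entering a vertex via port $1$, whereas the identically-labelled walk from $b_l$ ends by entering a vertex via port $2^l$, so the two walks are not isomorphic in $G_l$. The bijection then produces two non-isomorphic non-backtracking walks of length $D(l+1)$ from $a_l$ and from $b_l$ in $\subdivide{G_l}{D}$, so $\cV_{D(l+1)}(a_l)\neq\cV_{D(l+1)}(b_l)$, and a fortiori $\cV(a_l)\neq\cV(b_l)$.

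The main obstacle is checking that walk isomorphism really is preserved by the bijection: this hinges on the precise labelling rule for $\portlabel_D$, which for each $G_l$-edge makes the internal labels of its subdivision path depend only on the ordered pair $(\portlabel(u,v),\portlabel(v,u))$, so that congruent $G_l$-edges yield congruent subdivision-path labellings. In the `symmetric' case $\portlabel(u,v)=\portlabel(v,u)$ this requires $D$ to be odd, which is exactly the hypothesis of the lemma; the remaining work is then a routine reading-off of the two claims from the corresponding $G_l$-statements.
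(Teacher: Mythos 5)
Your proposal is correct and follows essentially the same route as the paper: reduce to $G_l$ via the correspondence between non-backtracking walks of length $Dk$ in $\subdivide{G_l}{D}$ and walks of length $k$ in $G_l$, use the isomorphism-preservation guaranteed by the definition of $\portlabel_D$ (with $D$ odd for symmetric edges), and then invoke Claim~\ref{claim:isomorphic-paths} together with the already-established facts about $a_l$ and $b_l$ in $G_l$. You merely spell out two points the paper leaves implicit --- the degree-$2$ forcing that makes the walk correspondence a bijection, and the explicit port-$2$ walk witnessing $\cV(a_l)\neq\cV(b_l)$ --- which is a welcome elaboration rather than a deviation.
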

\begin{proof}
In order to prove that $\cV_{Di}(a_l)=\cV_{Di}(b_l)$, we will use the characterization from Claim~\ref{claim:isomorphic-paths}. Let $P_j=(u_0^j,u_1^j,\ldots,u_{kD}^j)$, $j\in\{1,2\}$, be any two non-backtracking paths in $\subdivide{G_l}{D}$ such that $u_0^1=a_l$ and $u_0^2=b_l$.

By construction, $P_j'=(u_0^j,u_{D}^j,u_{2D}^j,\ldots,u_{kD}^j)$ is a path in $G_l$ for each $j\in\{1,2\}$.
By the definition of port labeling of $\subdivide{G_l}{D}$, for paths ending at nodes within $V(G_l)$, the port labelings of $P_1$ and $P_2$ are identical if and only if the port labelings of $P_1'$ and $P_2'$ are identical. Thus, $P_1$ and $P_2$ are isomorphic in $\subdivide{G_l}{D}$ if and only if $P_1'$ and $P_2'$ are isomorphic in $G_l$. Since $i\leq l-1$, by Claim~\ref{claim:isomorphic-paths} we obtain that $\cV_{Di}(a_l)=\cV_{Di}(b_l)$. The fact that $\cV(a_l)\neq\cV(b_l)$ follows from similar arguments.
\end{proof}

\begin{theorem}
Let $D'\geq 3$ and $n'\geq 1$ be arbitrary integers with $n' \geq D'\cdot 2^{12}/3$. There exists a graph $G$ with at most $n'$ nodes and diameter at most $D'$, which contains two nodes having distinct views which are identical when truncated up to depth $\frac{D'-5}{6}\log_2\frac{n'}{D'} - 0.41D'$.
\end{theorem}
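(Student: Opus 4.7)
The plan is to apply the subdivision construction $\subdivide{G_l}{D}$ of Lemma~\ref{lem:diff-in-dG_l} with a carefully chosen odd parameter $D$ and size parameter $l\ge 6$. First I would let $D$ be the largest odd positive integer with $3D\le D'$; a short case analysis on $D'\bmod 6$ gives $(D'-5)/3 \le D \le D'/3$, and in particular $D\ge 1$ so the construction is well-defined, while by Claim~\ref{claim:new_diam} the resulting graph has diameter at most $3D \le D'$. Next I would let $l$ be the largest integer with $|V(\subdivide{G_l}{D})| \le n'$. Using $|V(\subdivide{G_l}{D})| = (l+2)2^l + (D-1)|E(G_l)|$ together with $|E(G_l)| < 2^{2l}$ from Claim~\ref{claim:diam-of-G_l}, the hypothesis $n' \ge D'\cdot 2^{12}/3$ gives $|V(\subdivide{G_6}{D})| < D'\cdot 2^{12}/3 \le n'$, so $l\ge 6$. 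Claim~\ref{claim:diam-of-G_l}, Proposition~\ref{pro1}, and Lemma~\ref{lem:diff-in-dG_l} then all apply, producing nodes $a_l, b_l$ in $\subdivide{G_l}{D}$ with $\cV(a_l)\ne\cV(b_l)$ but $\cV_{D(l-1)}(a_l)=\cV_{D(l-1)}(b_l)$.

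The technical heart of the argument is to show $D(l-1) \ge \tfrac{D'-5}{6}\log_2(n'/D') - 0.41\,D'$. By maximality of $l$ one has $|V(\subdivide{G_{l+1}}{D})| > n'$, and a tighter count of $|E(G_{l+1})|$ than the crude bound $2^{2(l+1)}$ of Claim~\ref{claim:diam-of-G_l}---for instance $|E(G_{l+1})| \le 2^{2l+1}(1+o(1))$, via the exact clique count $\binom{2^{l+1}}{2}$ at level $l+2$ plus the $O(l\cdot 2^l)$ contributions of Stages~1,~3,~4---would let me conclude $(D-1)\cdot 2^{2l+1} \ge (1-o(1))\,n'$, whence $2l \ge \log_2(n'/D) - c_0$ for an explicit absolute constant $c_0$, yielding
\[
D(l-1) \ge \tfrac{D}{2}\log_2(n'/D) - c_1 D
\]
with an explicit $c_1$. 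Splitting $\log_2(n'/D) = \log_2(n'/D') + \log_2(D'/D)$ and using $D \le D'/3$ produces a free positive correction of $(D/2)\log_2(D'/D) \ge (D/2)\log_2 3 \approx 0.79\,D$, which absorbs most of the $c_1 D$ error. Finally $D/2 \ge (D'-5)/6$ delivers the target leading coefficient, while $D \le D'/3$ converts the residual linear error into a term below $0.41\,D'$.

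The hard part will be the constant bookkeeping in this last step: the $0.41$ coefficient is the net result of (i) using the tighter edge bound $|E(G_l)|\sim 2^{2l-1}$ instead of the looser $2^{2l}$ of Claim~\ref{claim:diam-of-G_l}, (ii) the positive correction $\log_2(D'/D) \ge \log_2 3$ arising from the rounding of $D$ down to the largest odd integer $\le D'/3$, and (iii) a careful treatment of the lower-order vertex term $(l+2)2^l$, which is negligible for $l \gtrsim \tfrac12\log_2(n'/D)$. A minor boundary case is $D=1$ (precisely when $D' \in \{3,\ldots,8\}$): subdivision is then trivial and $\subdivide{G_l}{1}=G_l$, but here the target depth $\tfrac{D'-5}{6}\log_2(n'/D')-0.41\,D'$ is non-positive, so the bound holds as soon as $l \ge 6$, which is again ensured by $n' \ge D'\cdot 2^{12}/3$.
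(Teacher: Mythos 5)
Your overall strategy matches the paper's: the same choice of $D$ (largest odd integer with $3D\le D'$), the same graph $\subdivide{G_l}{D}$, and Lemma~\ref{lem:diff-in-dG_l} applied to $a_l,b_l$. The difference lies in how $l$ is fixed. The paper simply sets $l=\lfloor\frac12\log_2(3n'/D')\rfloor$; the one-sided bound $n<D|E(G_l)|<D2^{2l}\le D'2^{2l}/3$ from Claim~\ref{claim:diam-of-G_l} then gives $n\le n'$, while the floor gives $l-1\ge\frac12\log_2(3n'/D')-2$ directly, with no need for any lower bound on $|E(G_l)|$ and no case analysis. Your choice of $l$ as the maximal integer with $|V(\subdivide{G_l}{D})|\le n'$ forces you to prove a near-matching \emph{lower} bound on $|V(\subdivide{G_{l+1}}{D})|$, and this is where the gap appears.

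Concretely: your key inequality $(D-1)\cdot 2^{2l+1}\ge(1-o(1))\,n'$ is vacuous when $D=1$, i.e., precisely when $D'\in\{3,\ldots,8\}$. You dismiss this case by asserting that the target depth $\frac{D'-5}{6}\log_2\frac{n'}{D'}-0.41D'$ is then non-positive, but that is false: for $D'=7$ the target equals $\frac13\log_2(n'/7)-2.87$, which is positive for every admissible $n'\ge 7\cdot2^{12}/3$, and the same happens for $D'=8$ and for $D'=6$ once $n'$ is large. In these cases your maximality condition only yields $(l+3)2^{l+1}>n'$, so $l\approx\log_2 n'-\log_2\log_2 n'$, and one must check separately that $l-1$ still exceeds the target (it does, since the target grows at most like $\frac12\log_2 n'$, but this is a different computation from your main one and you have not supplied it). For $D\ge3$ your constant bookkeeping does close: with $|E(G_{l+1})|\le 2^{2l+1}\bigl(1+(2l+3)2^{-l}\bigr)$ and $l\ge6$ one gets roughly $2l\ge\log_2(n'/D)-1.5$, and the credit $\frac{D}{2}\log_2(D'/D)\ge\frac{D}{2}\log_2 3\approx0.79D$ together with $D\le D'/3$ keeps the loss within the $0.41D'$ budget. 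So the proof is repairable, but as written the $D=1$ branch rests on a false claim; the paper's explicit choice of $l$ avoids the issue altogether.
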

\begin{proof}
Let $D$ be the largest odd integer such that $3D \leq D'$. Note that $D \geq (D'-5)/3$ and $1\leq D \leq D'/3$.
Take $G=\subdivide{G_l}{D}$, $a=a_l$ and $b=b_l$, where $l$ is selected so that $n=|V(G)| \geq n'$. Observe that, by Claim~\ref{claim:diam-of-G_l}, the number of nodes of $G$ satisfies: 
$$n = |V(G_l)|+(D-1)|E(G_l)| < D|E(G_l)| <  D 2^{2l} \leq D' 2^{2l}/3.$$ 
Thus, $n \leq n'$ is satisfied if $D'2^{2l}/3 \leq n'$; we put $l = \lfloor \frac12 \log_2 (3n'/D') \rfloor$.  (Note that $l\geq 6$ by assumption.)

By Lemma~\ref{lem:diff-in-dG_l}, the views of $a_l$ and $b_l$ are different in $G$, but the same when truncated up to depth $D(l-1)$. We have:
\begin{align*}
D(l-1) & \geq \frac{D'-5}{3}\cdot \left(\frac12 \log_2\frac{3n'}{D'} - 2\right) 
       =   \frac{D'-5}{6}\log_2\frac{n'}{D'} + \frac{D'}{3}\left(\frac{1}{2}\log_2 3 - 2\right) - \frac{5}{3}\left(\frac{1}{2}\log_2 3 - 2\right) > \\
       & >  \frac{D'-5}{6}\log_2\frac{n'}{D'} - 0.41D'.
\end{align*}
\end{proof}

\section{Final remarks}

We have shown a tight lower bound of $\Omega(D\log (n/D))$ on the depth to which the views of a pair of nodes of a symmetric anonymous network need to be checked in order to decide if their views in the graph are different. We remark that our problem of view distinction can be generalized in the following two directions:
\begin{itemize}
\item One may consider scenarios in which some information (labels) is also encoded at nodes of the network, and also appears as a node-labeling in the definition of the view. (Such an extended definition of views has appeared, e.g., in the context of leader election in networks where not all identifiers are distinct~\cite{YK99}).
\item One may ask about the depth of the view which suffices not only to distinguish a pair of nodes of the same graph having distinct views, but also any pair of nodes of two arbitrary graphs, which have the same view. (This type of distinction is required in, e.g., in so-called map construction problems~\cite{CDK10}.)
\end{itemize}
Since our lower bound concerns a more restricted scenario, it immediately applies to both of the above cases as well. Formally, when considering a pair of graphs, as $n$ and $D$ we take the maximum order and diameter of the two graphs.

{
At the same time, the techniques used by Hendrickx~\cite{Hendrickx14} to show a corresponding upper bound of $O(D\log_2(n/D))$ for distinguishing a pair of nodes of a connected graph can be adapted to apply to all of the above cases as well, including the scenario of distinguishing a pair of views in two different graphs. Indeed, suppose that there exist a graph $G_1$ on $n_1$ nodes with diameter $D_1$ containing a node $v_1$, and a graph $G_2$ on $n_2$ nodes with diameter $D_2$ containing a node $v_2$, such that nodes $v_1$ and $v_2$ have views in their respective graphs indistinguishable up to some distance $l>1$. Then, one can construct a new connected graph $G$ on $n = n_1+n_2$ nodes with diameter $D \leq D_1 + D_2$, in which there exists a pair of nodes with views indistinguishable also up to distance $l$. To achieve this, denoting by $d$ the degrees of $v_1$ in $G_1$ and of $v_2$ in $G_2$, which are necessarily equal, we form $G$ by taking the disjoint union of graphs $G_1$ and $G_2$, and connecting vertices $v_1$ and $v_2$ by an edge labeled with port $d+1$ at both ends.}

{Thus, we can say that the question  of the necessary depth of view reconstruction with respect to the diameter of a symmetric port-labeled networks has been completely resolved.}

\bibliographystyle{abbrv}
\bibliography{distrib}

\end{document}